\DeclareMathAlphabet{\mathpzc}{OT1}{pzc}{m}{it}
\newtheorem{theorem}{Theorem}
\begin{document}
\title{Dynamic Base Station Repositioning to Improve Spectral Efficiency of Drone Small Cells}

\author{\IEEEauthorblockN{Azade Fotouhi\IEEEauthorrefmark{1}\IEEEauthorrefmark{2},
Ming Ding\IEEEauthorrefmark{2} and
Mahbub Hassan\IEEEauthorrefmark{1}\IEEEauthorrefmark{2}}
\IEEEauthorblockA{\IEEEauthorrefmark{1}School of Computer Science and Engineering\\
University of New South Wales (UNSW), Sydney, Australia\\
Email: \{a.fotouhi, mahbub.hassan\}@unsw.edu.au}
\IEEEauthorblockA{\IEEEauthorrefmark{2}Data61, CSIRO, Australia\\
Email: \{azade.fotouhi, ming.ding, mahbub.hassan\}@data61.csiro.au}
}
\maketitle

\begin{abstract}
With recent advancements in drone technology, 
researchers are now considering the possibility of deploying small cells served by base stations mounted on flying drones. 
A major advantage of such drone small cells is that the operators can quickly provide cellular services in areas of urgent demand without having to pre-install any infrastructure. 
Since the base station is attached to the drone, 
technically it is feasible for the base station to dynamic reposition itself in response to the changing locations of users for reducing the communication distance, 
decreasing the probability of signal blocking, 
and ultimately increasing the spectral efficiency. 
In this paper, 
we first propose distributed algorithms for autonomous control of drone movements, 
and then model and analyse the spectral efficiency performance of a drone small cell to shed new light on the fundamental benefits of dynamic repositioning.
We show that, 
with dynamic repositioning, 
the spectral efficiency of drone small cells can be increased by nearly 100\% for realistic drone speed, height, and user traffic model and without incurring any major increase in drone energy consumption.
\end{abstract}
\IEEEpeerreviewmaketitle

\section{Introduction} \label{sec:intro}
A drone is an unmanned aerial vehicle designed to be flown either through remote control or autonomously using embedded software and sensors. 
Historically, 
drones had been used mainly in military for reconnaissance purposes, 
but with recent developments in light-weight battery-powered drones, 
many civilian applications are emerging. 
Use of drones to deploy small cells in areas of urgent needs is one of the most interesting applications currently being studied by many researchers~\cite{Namuduri,6399121,7122576,7417609,7451189,7122575}. 
The greatest advantage of this approach is that drones can be equipped with small cell base station (BS) module and sent to a specific target location immediately to establish emergency communication links without having to deploy any infrastructure.

In this paper, 
we go beyond the basic advantage of quick deployment and seek further benefits that these emerging agile drones could bring to cellular communication networks. 
Since the base station is attached to the drone, 
technically it is feasible for the base station to dynamic reposition itself in response to the changing users locations for reducing the communication distance, 
decreasing the probability of signal blocking, 
and ultimately increasing the spectral efficiency. 
While \textit{dynamic repositioning} of BSs is a conceptually simple idea, 
its realisation would require development of distributed algorithms for autonomous control of continuous drone movement that would optimise the spectral gain. 
It is also not immediately clear whether the benefits of dynamic repositioning would be significant for practical use, 
given that drone flying speeds must be restricted to conserve battery life and to ensure safety operations.
Moreover, 
the impact of user traffic model on the spectral efficiency gain will also have to be carefully analysed before any conclusions can be made about the performance improvement of dynamic repositioning the drone BSs.

Recent studies~\cite{7417609,7451189, 7122575} on drone small cells mainly focused on finding a stationarily optimal location in the air for the drone to \textit{hover}, 
while serving the target area with a given population and traffic demand. 
To the best of our knowledge, 
the concept of dynamic repositioning drone BSs has not been adequately explored in the literature to quantify its benefits in terms of the spectral efficiency gain. 
In this paper, 
we model and analyse the spectral efficiency of a drone small cell with or without dynamic repositioning and shed new light on the fundamental benefits of dynamic repositioning. 
Our analysis shows that, 
with dynamic repositioning, 
the spectral efficiency of drone small cells can be increased by nearly 100\% for realistic drone speed, height, and users traffic model and without incurring any major increase in drone energy consumption.

The novelty and contributions of this paper are summarised as follows:
\begin{itemize}
\item We propose an analytical model for a single drone small cell serving a single user, 
and derive the spectral efficiency gain for dynamic repositioning. 
Our model sheds new light on the fundamental benefits of dynamic repositioning for drone small cells as a function of drone flying speed, height and user traffic parameters. 
Our analytical results confirm that dynamic repositioning can increase spectral efficiency of drone cells \textit{significantly}.

\item We propose two distributed algorithms for autonomous dynamic repositioning of the drone with an objective to optimize the spectral efficiency. 
    The first algorithm, 
    which is simple to implement, 
    enables the drone to make movement decisions using only local position information of the active user that it is currently serving. 
    The second algorithm requires information about user locations in neighbor cells to minimize \textit{interference leakage} to the surrounding and maximize the overall spectral efficiency of the entire network area. 
    Both algorithms are evaluated in realistic multi-cell environments.

\item We simulate the proposed dynamic positioning algorithms using realistic drone and traffic parameters with multiple cells and multiple mobile users in each cell. 
    As for the medium access scheme to serve multiple users in each cell, 
    we consider two widely used methods of resource allocation, 
    i.e., 
    time division multiple access (TDMA) and frequency division multiple access (FDMA). 
    Our simulations show that, 
    even with the simplistic dynamic repositioning algorithm that requires the local position information only, 
    the spectral efficiency of drone small cells can be increased by nearly 100\% without any negative effect on drone battery life.
    This result is achievable with both TDMA and FDMA,
    making dynamic repositioning a practical solution for a wide range of existing platforms. 
    A further 8\% increase in the spectral efficiency can be achieved with the more advanced dynamic repositioning algorithm that requires the neighboring position information to minimize interference leakage to adjacent cells.

\end{itemize}

The remainder of the paper is structured as follows. 
Related work is reviewed in Section~\ref{sec:relatedwork}. 
Section~\ref{sec:systemmodel} introduces the system model. 
In Section~\ref{sec:analymodel}, 
we present the analytical model and theoretical results for a single drone cell. 
Our proposed algorithms and simulation results for a network with multiple drone cells are presented in Sections~\ref{sec:proposedmethod} and~\ref{sec:simulationresults},
respectively.
Finally, we conclude our paper in Section~\ref{sec:conclusion}.

\section{Related Work} \label{sec:relatedwork}

Drones have been considered both in the context of data gathering in wireless sensor networks~\cite{6848000,7192644,7090210}, 
and more recently in the context of delivering data to mobile users in cellular networks. 
Since the focus of this paper is on cellular networks, 
we only review the drone-related research relevant to cellular networks.

Because of the flexibility and agility of drones, 
deploying the drone base stations (BSs) in optimal locations to maximize various network metrics are investigated in the literature.
Al-Hourani et al.~\cite{al2014optimal} provided an analytical model to find an optimal altitude for one UAV that provides the maximum coverage of the area. 
A service threshold in terms of maximum allowable path loss is defined in this model. 
Another recent study by Mozaffari et al.~\cite{7510870} studied the problem of finding the optimal cell boundaries and deployment location for multiple non-interfering UAVs. 
The objective of that study was to minimize the total transmission power of the UAVs. 

UAVs were also expected to establish emergency communication links during disaster situation and thus improve public safety in~\cite{7122576}. 
That work showed that by optimal placement of UAVs the system throughput can be improved significantly. 
Brute force search was used to find the optimal location of UAVs in the target area. 

Finding the 3D optimal location for deploying a drone cell was studied in~\cite{yaliniz2016}. 
When some users with QoS requirements are distributed in an area, 
a 3D location could be found for deploying a drone cell to provide services for the maximum number of users satisfying their SNR (Signal to Noise Ratio) constraints. 

Multiple interfering UAVs bring more challenges such as the distance between UAVs. 
The problem of the optimal deployment of two interfering drone small cells is investigated in~\cite{7417609}. 
Rohde et al.~\cite{6399121} addressed the problem of cell outage or cell overload using UAVs to temporarily offload the traffic to neighbor cells in 4G networks. 
In that work, 
a central planning model for the placement of UAV relays was discussed, 
the feasibility of the solution was also proved using an analytical model. 

Moreover, a propagation model for air-to-ground communication is studied in~\cite{7037248} and~\cite{al2014optimal}. 
Those studies derived a closed-form expression based on elevation angle to present the probability of having LoS (Line of Sight) connection for low altitude platforms. 

There are a few other studies in the area of optimizing the deployment of the drone BSs. 
However, 
they have not considered the mobility of drones after the deployment, 
to see how dynamic repositioning of the drones can improve the network performance.

Note that in our previous work~\cite{7848883}, 
we focused on the spectral efficiency gain of dynamically repositioning a single drone. 
Here, 
we consider multiple interfering drone BSs, 
devise various heuristic algorithms, 
and validate them through extensive simulation.
In addition, 
we consider practical commercial drone BSs in this work, 
which mandates that drones should not fly too high or too fast. 
In contrast, 
the existing studies~\cite{7192644,7417609} usually considered the application of drones in wireless sensor networks or city-wide broadcasting services, 
where drones can fly at a height of several hundreds of meters and at a speed of several tens of kilometers per hour. 
More specifically, 
we propose that drone BSs should not fly too high in cellular networks because,
\begin{itemize}
  \item (a) Drones flying at a high altitude, e.g., larger than 10$\sim$20 meters, 
  will significantly increase the inter-cell interference as shown in the 4G/5G networks~\cite{Ding2016GC_ASECrash}.
  \item (b) Drones flying at a higher altitude will cause much more energy consumption than that at a lower altitude~\cite{ZORBAS201380,Rubio2016}.
  \item (c) Drones flying at a high altitude are susceptible to wind, 
  thus they will have to be made heavy and large, 
  which are not safe to use in practice in case of crash.
\end{itemize}

Besides, we propose that drone BSs should not fly too fast in cellular networks because,
\begin{itemize}
  \item (a) Drones flying at a higher speed will cause much more energy consumption than that at a lower speed \cite{7101619}.
  \item (b) Drones flying at a high speed will cause tremendous damage in case of collision.
\end{itemize}

As practical assumptions, 
we propose that the flying height of drone BSs should be around 10 meters as widely used by small cell BSs in the 3GPP LTE networks \cite{TR36.828,3gpp36814}, 
and the flying speed of drone BSs should be around 10m/s, 
which does not cause extra energy consumption compared with drone hovering as shown in the measurement study for the state-of-the-art drones~\cite{7101619}.

With the above practical assumptions, 
it is particularly interesting to investigate \emph{the spectral efficiency gain of dynamic repositioning of drone BSs in the cellular networks}, 
\emph{if there exists some gain at all}. 
We will answer this fundamental question in the following sections.

\section{System Model} \label{sec:systemmodel}

In this section, 
we describe the system model for the proposed dynamic repositioning of drone BSs. 
When a drone is moving across its cell,
the distance between the drone and its users will change and in turn affect the strength of the received signals.
As a result,
the drone needs to update its moving direction constantly according to the position of active users who have data requests.
The direction should be selected in a way to maximize the spectral efficiency of the network.

\subsection{Network Scenario}\label{sec:overall}

Let's consider a network area consisting of $N$ cells,
with each cell having a pre-determined shape,
e.g., square (with a edge length of $l$ (in meter)), 
or disk (with a radius of $R$ (in meter)).
In each considered cell,
there is a drone that can either move with a constant speed of $v$ (in m/s) or hover at a low height of $h$ (in meter).
There are $U$ mobile users in each cell. 
Initially all drones are located above the center of their cells.

Each drone, 
which may be connected to a nearby macro cell tower with a wireless back-haul link, 
is responsible to provide wireless communication services for the users in its cell.

The \textit{ground distance} or the two-dimensional (2D) distance between user $u$ and drone $n$ is defined by the distance between the user and the projection of the drone location onto the ground,
denoted by $r_{u,n}$.
The \textit{Euclidean distance} or the three-dimensional (3D) distance between user $u$ and drone $n$ is presented by $d_{u,n} = \sqrt{r_{u,n}^2 +h^2}$.

We further assume that each drone is transmitting data to users with a fixed transmission power of $P_{tx}$ (in watt),
$B$ (in Hz) is the total bandwidth,
and $f$ (in Hz) is the central carrier frequency.
As we are considering a multi-user system,
the amount of allocated bandwidth to an active user $u$ in a cell is denoted by $b_u$ ($1 \leq u \leq U$, $0 \leq b_u \leq B$).

\subsection{Traffic Model}\label{sec:traffic_model}

The traffic model for each user follows the recommended traffic model by 3GPP \cite{3gpp36814}.
More specifically,
we define two times, i.e., reading time and transmission time,
\begin{itemize}
  \item The reading time is defined as the time interval between the end of the download of the previous data package and the user request for the next one.
      In this paper,
      the reading time of each data package is modeled as an exponential distribution with a mean of $\lambda$ (in sec).
  \item The transmission time is denoted by $\tau$ and defined as the time interval between the request time of a data package and the end of its download.
\end{itemize}

Based on the above definition,
during a transmission time $\tau$,
the associated user is called an \textit{active} user.
The set of all active users in a cell $n$ at a specific time $t$ is denoted by $\mathpzc{Q}_{n}(t)$ ($1 \leq n\leq N$).
At a user's data requesting time,
the user will request a data package with a fixed size $s$ (in MByte).

\subsection{Channel Model} \label{sec:channel}

In this paper,
we consider a practical path loss model incorporating both LoS (Line of Sight) transmissions and NLoS (Non Line of Sight) transmissions.
More specifically,
the path loss function is formulated according to a probabilistic LoS model~\cite{al2014optimal},
in which the probability of having a LoS connection between a drone and its user depends on the elevation angle of the transmission link.
According to~\cite{al2014optimal},
the LoS probability function is expressed as
\begin{equation}
P^{LoS}(h,r_{u,n}) = \frac{1}{1+\alpha exp(-\beta[\theta -\alpha])},
\label{eq:plos}
\end{equation}
where $\alpha$ and $\beta$ are environment-dependent constants,
$\theta$ equals to $arctan(h/r_{u,n})$ in degree,
and $h$ denotes the drone height as discussed before.
As a result of (\ref{eq:plos}),
the probability of having a NLoS connection can be written as
\begin{equation}
P^{NLoS}(h,r_{u,n}) = 1 - P^{LoS}(h,r_{u,n}).
\label{eq:pnlos}
\end{equation}

From (\ref{eq:plos}) and (\ref{eq:pnlos}),
the path loss in dB can be modeled as
\begin{equation}
\eta_{path}(h,r_{u,n}) = A_{path} + 10\gamma_{path}\log_{10}(d_{u,n}),
\label{eq:pathloss}
\end{equation}
where the string variable \textit{"path"} takes the value of “LoS” and “NLoS” for the LoS and the NLoS cases, respectively.
In addition,
$A_{path}$ is the path loss at the reference distance (1 meter) and $\gamma_{path}$ is the path loss exponent,
both obtainable from field tests \cite{TR36.828}.

\subsection{MAC Layer} \label{sec:resourceallocation}

Regarding the multiple access scheme to allow multiple users to be served by drone BSs, 
we study two resource allocation methods, 
namely FDMA and TDMA, 
which will be explained in the following subsections.

\subsubsection{FDMA}\label{sec:FDMA}

In the FDMA method,
each drone simultaneously serves all the active users by \emph{equally} dividing the whole available bandwidth among them.
When a transmission for a request finishes,
the amount of released bandwidth is again shared \emph{equally} among the unfinished requests.
Subsequently,
in this method all active users in a cell will receive data from their serving drone BS. 
It is important to note that the orthogonal FDMA (OFDMA) technique adopted in the 4G LTE networks divides the frequency channel into sub-carriers,
and thus it is merely a special case of FDMA.

Also note that the equal bandwidth division ensures fairness among the active users.
To achieve different levels of trade-off between performance and fairness, 
other unequal resource allocation methods can be studied in our work.
Since it is straightforward to do so and does not affect our framework,
we omit such discussion for brevity.

\subsubsection{TDMA} \label{sec:tdma}

In the TDMA method,
instead of allocating the bandwidth equally to all the active users,
the whole bandwidth is devoted to only one user during each time slot.
Among all the active users,
the one that has the maximum received signal strength is selected.

\subsection{Performance Metrics} \label{sec:metrics}

The Signal to Noise Ratio (SNR) of user $u$ connected to drone $n$ is defined as
\begin{align}
 \begin{split}
SNR_u(h,r_{u,n})=\frac{S^{path}(h,r_{u,n})}{N_u},
\end{split}
 \label{eq:snr}
\end{align}
where $N_u$ (in watt) represents the total noise power including the thermal noise power and the user equipment noise figure,
which is given by \cite{thermalnoise} 
\begin{equation}
N_u = 10^{ \frac{-174+\delta_{ue}}{10}}\times{b_u}\times10^{-3},
\label{eq:noise}
\end{equation}
where $\delta_{ue}$ is the user equipment noise figure (in dB).

In (\ref{eq:snr}),
$S^{path}(h,r_{u,n})$ (in watt) indicates the received power,
which can be obtained by
\begin{align}
 \begin{split}
S^{path}(h,r_{u,n})&=\frac{b_u}{B} P_{tx}10^{\frac{-\eta_{path}(h,r_{u,n})}{10}}\\
							&=\frac{b_u}{B} P_{tx}A'_{path}(d_{u,n})^{-\gamma_{path}},
\end{split}
\label{eq:rcvpower}
\end{align}
where $A'_{path} = 10^{\frac{-A_{path}}{10}}$.

Plugging (\ref{eq:noise}) and (\ref{eq:rcvpower}) into (\ref{eq:snr}), yields
\begin{align}
 \begin{split}
SNR_u^{path}(h,r_{u,n})=\frac{P_{tx}A'_{path}(d_{u,n})^{-\gamma_{path}}}{BN'},
\end{split}
 \label{eq:snr_simple}
\end{align}
where $N' = 10^{ \frac{-174+\delta_{ue}}{10}}\times10^{-3}$.

Moreover, 
the Signal to Interference plus Noise Ratio (SINR) of user $u$ connected to drone $n$ can be expressed as
\begin{align}
 \begin{split}
SINR_{u,n}^{path}(h,r_{u,n})&=\frac{S^{path}(h,r_{u,n})}{I_u+N_u}\\
&= \frac{S^{path}(h,r_{u,n})}{\big(\sum_{i \in N, i \not= n, r_{u,i} \leq \kappa } S^{path}(h,r_{u,i})\big)+N_u},
\end{split}
 \label{eq:sinr}
\end{align}
where $I_u$ (in watt) represents the interference signal from neighbor cells received by user $u$. Assuming an interference distance $\kappa$,
all neighbor drones up to distance $\kappa$ which are transmitting data to their active users create interference signals for the user $u$. It is important to note that our SINR analysis applies to both the FDMA and the TDMA cases.

Note that in this paper we focus on the analysis of small cell networks (SCNs) with an orthogonal deployment in the existing macrocell networks,
where small cells and macrocells operate on different frequency spectrum,
i.e., Small Cell Scenario \#2a defined in~\cite{TR36.872}.
Indeed,
the orthogonal deployment of dense SCNs within the existing macrocell networks has been selected as the workhorse for capacity enhancement in the 3rd Generation Partnership Project (3GPP) 4th-generation (4G) and the 5th-generation (5G) networks.
This is due to its large spectrum reuse and its easy management~\cite{Tutor_smallcell};
the latter one arising from its low interaction with the macrocell tier, e.g., no inter-tier interference.

The \textit{spectral efficiency (SE)} (bps/Hz) of an active user $u$ associated with drone $n$ can be formulated according to the Shannon Capacity Theorem as~\cite{Book_Proakis}
\begin{align}
 \begin{split}
\Phi_u^{path}(h,r_{u,n}) = \log_2 (1+SINR_{u,n}^{path}).
\end{split}
 \label{eq:individualspec}
\end{align}

Given the probabilistic channel model,
the average SE for user $u$ can be expressed as
\begin{align}
 \begin{split}
\bar{\Phi}_u(h,r_{u,n}) =&P^{LoS}(h,r_{u,n})  \Big( \log_2 (1+\frac{S^{LoS}(h,r_{u,n})}{I_u+N_u})  \Big) \\
 +& P^{NLoS}(h,r_{u,n})  \Big( \log_2 (1+\frac{S^{NLoS}(h,r_{u,n})}{I_u+N_u})  \Big).
\end{split}
 \label{eq:averagespec}
\end{align}

Then, the average SE for each cell can be computed from
\begin{align}
 \begin{split}
\bar{\Phi}(n) = \frac{\sum_{u=1}^{U} \bar{\Phi}_u(h,r_{u,n}) }{U}.
\end{split}
 \label{eq:secell}
\end{align}

Consequently, the average SE of the considered $N$-cell system can be obtained by
\begin{align}
 \begin{split}
\bar{\Phi} = \frac{\sum_{n=1}^{N} \bar{\Phi}(n) }{N}.
\end{split}
 \label{eq:sesystem}
\end{align}

Note that $\bar{\Phi}$ should be a function of time $t$ because both $r_{u,n}$ and $I_u$ may change with time due to the movement of drones.
In order not to make the notations unnecessarily complicated,
we omit $t$ in the expressions of $\bar{\Phi}$, $r_{u,n}$ and $I_u$.
We will explicitly state otherwise if $t$ should be considered in the corresponding expressions.

According to the definition of the average SE,
the average user data rate (in bits/sec) can be written as
\begin{align}
 \begin{split}
\bar{\mathpzc{R}}_u = \bar{\Phi}_u(h,r_{u,n}) \times b_u.
\end{split}
 \label{eq:datarate}
\end{align}

In the considered multi-user multi-drone system,
we define a fairness metric according to the Jain index to evaluate the fairness among the users,
which is formally presented as \cite{jain1984quantitative}
\begin{align}
 \begin{split}
\mathpzc{J}(\bar{\mathpzc{R}}_1,\bar{\mathpzc{R}}_2, \dots , \bar{\mathpzc{R}}_U) = \frac{(\sum_{u=1}^{U} \bar{\mathpzc{R}}_u)^2}{U\sum_{u=1}^{U} (\bar{\mathpzc{R}}_u)^2}.
\end{split}
 \label{eq:jainindex}
\end{align}

\subsection{The Proposed Optimization Problem}\label{sec:problem}

In this paper, 
we study the optimization problem to find the best direction for each drone at a certain time $t$ with the objective of maximizing $\bar{\Phi}$ in a small time slot $\Delta t$.

The problem can be formulated as
\begin{align}
 \begin{split}
  (\omega^*_1,\dots,\omega^*_N)^t &= \smash{\displaystyle\max_{}} \quad \bar{\Phi}(t+{\Delta t}) \\
s.t \ \ & \left|r_{u,n}(t+{\Delta t})-r_{u,n}(t)\right|\leq{v{\Delta t}}.
\end{split}
 \label{eq:opt}
\end{align}
where $\omega^*_i$ denotes the optimal direction of the $i$th drone small cell at time $t$.
Note that in Problem (\ref{eq:opt}),
$t$ and $\Delta t$ should be considered in the expressions of $\bar{\Phi}$ as well as $r_{u,n}$ and $I_u$.
Also note that the constraint of Problem (\ref{eq:opt}) is due to the finite moving distance of drones in $\Delta t$.

For clarification,
all of the parameters defined in this section are summarized in Table~\ref{tbl:params}. 

\begin{table}[!t]
    \centering
\caption{Symbol and Definition of Parameters}
\label{tbl:params}
\begin{tabular}{ll}
\hline
{\bf Parameter}              & {\bf Definition}    		 \\ \hline
$N$					& Number of Cells	\\
$U$					&Number of Users in Each Cell			\\
$B$					& Total Bandwidth							\\
$R$					& Radius of a Disk-shaped Cell \\
$l$					& Edge Length of a Square Cell		\\
$f$					&Working Frequency							\\
$h$		& Drone Height 		\\
$v$		& Drone Speed				\\
$P_{tx}$		& Drone transmission power 						\\
$\lambda$		& Mean Reading Time	 \\
$A$		&Reference Distance Path Loss (LoS/NLoS)			\\
$\gamma$		&  Path Loss Exponent (LoS/NLoS)				\\
$\delta_{ue}$               & UE Noise Figure                     \\
$\alpha, \beta$               & Environmental Parameter for Urban Area      \\
$\kappa$   &Interference Distance \\
$s$		&Data Size			\\
$\Delta t$   &Time Slot  \\

\hline
\end{tabular}
\end{table}

\section{Analysis for a Single-Drone Single-User System} \label{sec:analymodel}

To study the fundamentals on the potential gain that dynamic repositioning of drones can achieve,
we first analyze a model with just one drone BS serving a randomly deployed static user in a disk-shaped cell.
We assume a cell with a radius $R$ and one static user $u$ located at the ground distance $r_0$ ($r_0\leq R$) from the center of the cell.

\subsection{Main Results for a Hovering Drone}\label{sec:resulthovering}

First,
we assume that there is a drone that can be deployed at a low altitude $h$, 
hovering above the center of target cell and serving the user located at $r_{0}$ during the transmission time $\tau$.
Given the fact that both the user and the drone are not moving,
the spectral efficiency of the user does not change during the transmission time and can be formulated as
\begin{align}
 \begin{split}
\bar{\Phi}_{hov}(h,r_0) =&P^{LoS}(h,r_0) \Big(\log_2 (1+SNR^{LoS}(h,r_0))  \Big) \\
 +& P^{NLoS}(h,r_0)  \Big( \log_2 (1+SNR^{NLoS}(h,r_0))  \Big).
\end{split}
 \label{eq:seanalyfixed}
\end{align}

In reality,
the user can be located anywhere in the cell with various ground distances.
Without loss of generality,
we assume the user is located in the cell following a uniform distribution.
As a result,
the probability density function for the ground distance of the user from the center of the target cell can be calculated as
\begin{align}
 \begin{split}
f(x) = \frac{2\pi{x}}{\pi{R}^2}.
\end{split}
 \label{eq:pdf}
\end{align}

Therefore,
the expected SE for a user receiving data from the hovering drone BS at height $h$ can be computed by
\begin{align}
 \begin{split}
\bar{\Phi}_{hov}(h) &= \int_{r=0}^{R}\bar{\Phi}_{hov}(h,r)f(r)dr.
\end{split}
 \label{eq:expected_se_anal}
\end{align}

\subsection{Main Results for a Dynamic Repositioning Drone}\label{sec:resultmobile}

Next,
we investigate the SE performance for a dynamic repositioning drone BS, 
which is initially located above the center of target area.
Here,
when the user is randomly and uniformly deployed in the cell,
the drone will move towards the user with a low speed of $v$ while it is transmitting data to the user during the transmission time $\tau$.
During $\tau$,
the data rate increases because of the shortening distance between the drone and the user and the increasing elevation angle from the drone to the user,
which results in a larger probability of having a LoS connection.

Let us start with a static user located at the ground distance $r_0$ from the initial location of dynamic repositioning drone BS.
Depending on the drone's speed and the initial ground distance toward the user,
the drone may finish transmitting before reaching the user.
Otherwise,
the drone can reach the user before $\tau$ (i.e., $r_0$ reduces to zero),
and continues transmitting while it is hovering on top of the user.
Therefore,
we define a moving time $t_m$,
which is equal to $\min\{\frac{r_0}{v},\tau\}$ and the average SE performance can be evaluated as
\begin{align}
 \begin{split}
   \bar{\Phi}_{mob}(h,r_0) =& c\frac{ \mathlarger{\int_{r=r_0}^{r_0-vt_m} \big(P^{LoS}(h,r)\Phi(h,r)\big)  dr}}{vt_m}\\
   +& c\frac{ \mathlarger{\int_{r=r_0}^{r_0-vt_m} \big(P^{NLoS}(h,r)\Phi(h,r)\big)  dr}}{vt_m}\\
     +& (1-c)\bar{\Phi}_{hov}(h,0),
\end{split}
 \label{eq:spec_anal_one_initial}
\end{align}
where $c$ is the fraction of the transmission time that the drone is moving towards the user ($\frac{t_m}{\tau}$).
Obviously, $1-c$ is the fraction of time that the drone is hovering on top of the user.

Considering the uniform distribution of the users, the average SE over all possible user locations can be calculated as
\begin{align}
 \begin{split}
\bar{\Phi}_{mob}(h) = \int_{r=0}^{R}\bar{\Phi}_{mob}(h,r)f(r)dr.
\end{split}
 \label{eq:averagemobileanaly}
\end{align}

\subsection{Comparison of the Hovering Drone and the Dynamic Repositioning Drone}\label{sec:compareanalytic}

Now we study the potential gain that dynamic repositioning of drones can achieve by comparing the SE performance of the hovering drone with the dynamic repositioning one.
The SE of the drone BS flying at $h$ can be significantly improved compared to that of the hovering drone BS at the same height $h$.
In the following \textbf{Theorem \ref{theory:segainsingle}},
we derive an approximate upper bound of such performance improvement,
which depends on the cell radius and the drone height.

\begin{theorem} \label{theory:segainsingle}
Assuming a uniform distribution of users in a disk-shaped cell,
an approximate upper bound of the SE improvement ratio for the dynamic repositioning drone over the hovering drone is given by
\begin{align}
 \begin{split}
 \frac{\bar{\Phi}_{mob}}{\bar{\Phi}_{hov}} \approx \frac{\log_2 (\frac{P_{tx}A_{LoS}'h^{-\gamma_{LoS}}}{BN'})}{\log_2 (\frac{P_{tx}A_{NLoS}'R^{-\gamma_{NLoS}}}{BN'})}.
\end{split}
 \label{eq:ratioanalytic}
\end{align}
\end{theorem}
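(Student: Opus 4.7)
The plan is to bound the numerator $\bar{\Phi}_{mob}$ from above and the denominator $\bar{\Phi}_{hov}$ from below by replacing each with the most optimistic (respectively pessimistic) scalar instance of the configuration space over which they are averaged. Concretely, I would replace the mobile-drone trajectory integral in (\ref{eq:spec_anal_one_initial})--(\ref{eq:averagemobileanaly}) by the best single snapshot of that trajectory, and the hovering-drone average in (\ref{eq:expected_se_anal}) by the worst single user location. This is the natural route to a closed-form ``gain'' expression that exhibits the right parameter dependence on $h$ and $R$ only, and is consistent with the word \emph{approximate} in the statement.

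For the dynamic repositioning drone, the most favourable situation is that (i) the drone is already directly above the user, so that $r_{u,n}=0$ and hence $d_{u,n}=h$; and (ii) the elevation angle is $\pi/2$, which from (\ref{eq:plos}) drives $P^{LoS}(h,0)$ towards $1$, so only the LoS term in (\ref{eq:averagespec}) survives. Substituting into (\ref{eq:snr_simple}) with $d_{u,n}=h$ and invoking the high-SNR approximation $\log_2(1+x)\approx\log_2(x)$, the individual SE (\ref{eq:individualspec}) collapses to $\log_2\!\bigl(P_{tx} A'_{LoS} h^{-\gamma_{LoS}}/(BN')\bigr)$, which upper-bounds $\bar{\Phi}_{mob}$ and matches the numerator of (\ref{eq:ratioanalytic}).

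For the hovering drone, I would take the least favourable user location, namely a user at the cell edge with $r_0=R$. In the regime $R\gg h$ that fits the system model (since $h\approx 10$~m while realistic cells are much larger), $d_{u,n}=\sqrt{R^2+h^2}\approx R$, and the small elevation angle drives $P^{NLoS}(h,R)$ towards $1$ in (\ref{eq:plos})--(\ref{eq:pnlos}), so only the NLoS term of (\ref{eq:seanalyfixed}) survives. Applying the high-SNR approximation once more collapses (\ref{eq:expected_se_anal}) to the lower bound $\log_2\!\bigl(P_{tx} A'_{NLoS} R^{-\gamma_{NLoS}}/(BN')\bigr)$, precisely the denominator of (\ref{eq:ratioanalytic}). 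Dividing the numerator upper bound by the denominator lower bound yields the claimed ratio.

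The main obstacle is that all three approximations used above are only heuristically valid: $P^{LoS}$ never actually reaches $0$ or $1$ on a disk of finite radius, the high-SNR approximation introduces a non-uniform error that depends on $A_{path}$ and $\gamma_{path}$, and the identification $d_{u,n}\approx R$ only holds for $R\gg h$. Making the bound fully rigorous would require carrying the ``$+1$'' inside the logarithm and tracking the residual $P^{NLoS}$-weighted contributions through the integrals (\ref{eq:expected_se_anal}) and (\ref{eq:averagemobileanaly}), which would inflate the right-hand side well beyond the compact form (\ref{eq:ratioanalytic}). I therefore expect the author's argument to absorb all of these slacks into the ``$\approx$'' symbol rather than to estimate them explicitly.
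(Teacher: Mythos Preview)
Your overall strategy---maximize the mobile case, minimize the hovering case, and take the ratio---matches the paper, and your treatment of the numerator (drone directly above the user, pure LoS, high-SNR) is essentially identical to the paper's $\tau\to\infty$ argument. The one genuine difference is in how you lower-bound $\bar{\Phi}_{hov}$. You replace the spatial average by a single worst-case point evaluation at $r_0=R$, whereas the paper keeps the uniform average in (\ref{eq:expected_se_anal}), forces pure NLoS inside the integrand, and then actually carries out the integral $\tfrac{2}{R^2}\int_0^R \log_2\!\bigl((h^2+r^2)^{-\gamma_{NLoS}/2}\bigr)\,r\,dr$ using the approximation $r\gg h$; the resulting $\tfrac{2}{R^2}\int_0^R r\log_2 r\,dr$ is then rounded to $\log_2 R$. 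Both routes land on the same denominator, but they trade off different things: your point-evaluation argument is shorter and avoids any calculus, while the paper's averaging is more faithful to the ``uniform distribution'' hypothesis in the statement and makes clearer that the bound is not driven solely by a measure-zero set of edge users. Your closing paragraph about the non-rigorous slacks being absorbed into the ``$\approx$'' is exactly right for both versions.
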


\begin{proof}
To obtain the upper bound of the SE improvement,
we consider the case of $\tau\to\infty$.
Note that a very large $\tau$ implies that the drone has enough time to reach on top of the user and finish the transmission at that optimal position.
The optimal position is established in the sense that the probability of having a LoS connection is equal to 1,
thus achieving the maximum SE performance.
As a result,
the maximum SE of the dynamic repositioning drone should be that of a hovering drone directly above the user,
which can be derived as
\begin{align}
 \begin{split}
 \bar{\Phi}_{mob}(h)_{\tau\to\infty} &\approx \frac{2}{R^2}\int_{r=0}^{R}\bar{\Phi}_{mob}(h,0)rdr\\
            &\approx  \frac{2}{R^2}\int_{r=0}^{R}\log_2 (1+SNR^{LoS}_u(h,0))rdr \\
            &\stackrel{SNR^{LoS}_u(h,0)\gg1}{\approx} \frac{2}{R^2}\int_{r=0}^{R}\log_2 (SNR^{LoS}_u(h,0))rdr \\
            &\approx  \frac{2}{R^2}\int_{r=0}^{R}\log_2 (\frac{P_{tx}A_{LoS}'h^{-\gamma_{LoS}}}{BN'})rdr\\
            &\approx \log_2 (\frac{P_{tx}A_{LoS}'h^{-\gamma_{LoS}}}{BN'}).
\end{split}
 \label{eq:bestmobilesesingle}
\end{align}

Similarly,
the minimum SE of the hovering drone should be achieved when there exists a NLoS connection between the drone and the user. 
Such SE is given by
\begin{align}
 \begin{split}
 \bar{\Phi}_{hov}(h) &\approx \frac{2}{R^2}\int_{r=0}^{R}\bar{\Phi}_{hov}(h,r)rdr\\
            &\approx  \frac{2}{R^2}\int_{r=0}^{R}\log_2 (1+SNR^{NLoS}_u(h,r))rdr \\
            &\stackrel{SNR^{NLoS}_u(h,r)\gg1}{\approx} \frac{2}{R^2}\int_{r=0}^{R}\log_2 (SNR^{NLoS}_u(h,r))rdr \\
            &\approx  \frac{2}{R^2}\int_{r=0}^{R}\log_2 (\frac{P_{tx}A_{NLoS}'d_{r}^{-\gamma_{NLoS}}}{BN'})rdr\\
            &\approx \log_2 (\frac{P_{tx}A_{NLoS}'}{BN'})\\
            & + \frac{2}{R^2}\int_{r=0}^{R}\log_2 \big(({h^2+r^2})^{-\gamma_{NLoS}/2}\big)rdr\\
            &\stackrel{r\gg h}{\approx} \log_2 (\frac{P_{tx}A_{NLoS}'}{BN'}) - \frac{2\gamma_{NLoS}}{R^2}\int_{r=0}^{R}\log_2(r)rdr \\
            &\approx \log_2 (\frac{P_{tx}A_{NLoS}'}{BN'}) - \gamma_{NLoS}log_2(R)\\
            &\approx \log_2 (\frac{P_{tx}A_{NLoS}'R^{-\gamma_{NLoS}}}{BN'}).
\end{split}
 \label{eq:besthoveringsesingle}
\end{align}

Then,
an approximate upper bound of the spectral efficiency ratio of the dynamic repositioning drone over the hovering drone can be obtained by the ratio of $\bar{\Phi}_{mob}(h)_{\tau\to\infty}$ over $\bar{\Phi}_{hov}(h)$ as shown in (\ref{eq:ratioanalytic}),
which concludes our proof.
\end{proof}

In order to get some sense about how large is the SE improvement ratio derived in \textbf{Theorem \ref{theory:segainsingle}},
we provide some numerical results in Figure \ref{fig:analytical} with the parameters set to $R=40m$, $v={10,15,20}m/s$, $s=2MByte$, $P_{tx} = 24dBm$, and $B=10MHz$.
\begin{figure}[!t]
\centering
  \includegraphics[scale=0.44]{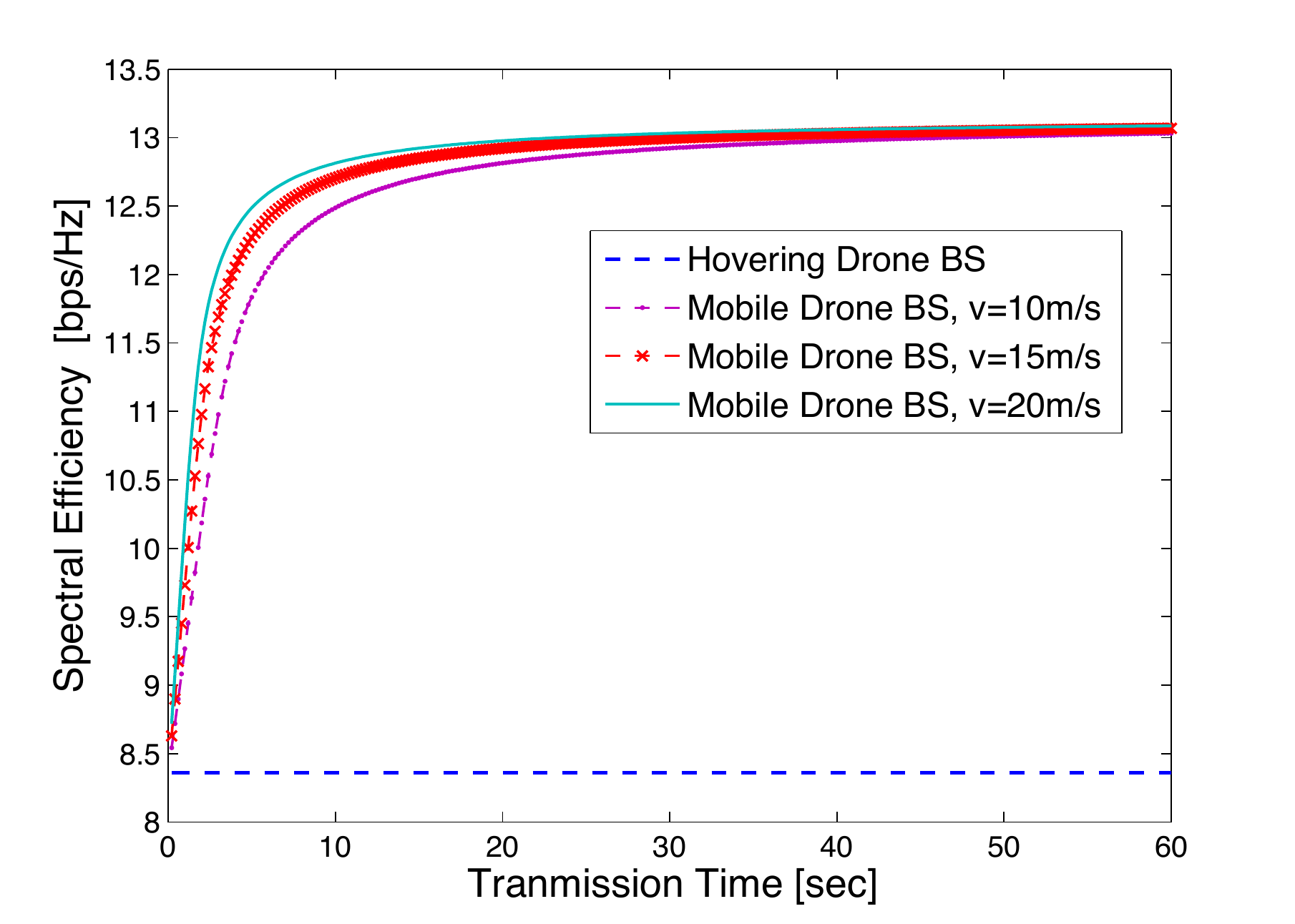}
\caption{SE for hovering and dynamic repositioning drone at height 10m vs. transmission time for different drone speed}
\label{fig:analytical}
\vspace{-0.4cm}
\end{figure}
From Figure \ref{fig:analytical},
we can observe that the dynamic repositioning drone flying at $h=10m$ can improve the SE up to 57\% compared with the hovering drone BS.
The SE is getting larger for the dynamic repositioning drone with a longer transmission time $\tau$,
because a larger $\tau$ implies that the drone has more time to fly closer to the user.

Moreover,
a higher drone speed promises a higher SE.
The reason is that a drone with a higher speed can fly closer to the user in a shorter time frame than that with a lower speed.
However,
the benefit of the high drone speed wanes as the transmission time increases.
Hereafter,
we focus on the drone speed of $v=10m/s$,
which is a practical speed for the drone as discussed in Section~\ref{sec:intro}.
Moreover,
according to~\cite{7101619} and as discussed in Section~\ref{sec:intro},
the power consumption of a drone moving with a speed of $10m/s$ is almost equal to the power consumption of a hovering drone at the same height.

This huge SE improvement motivates us to optimize the moving direction for the drone BSs in order to achieve their potential SE performance.
To consider more practical networks,
multiple mobile users and multiple dynamic repositioning drones are considered in the following sections.
In the sequel,
we assume that both the hovering drone BSs and the dynamic repositioning ones are operating at the same low height,
and the benefits of the dynamic repositioning drone BSs will be quantified.

\section{The Proposed Dynamic Repositioning Scheme for a Multi-Drone Multi-User System}\label{sec:proposedmethod}

Given the huge SE improvement resulting from the dynamic repositioning of drone BSs,
we extend our scheme to the application on a multi-drone multi-mobile-user network scenario.

It is apparent that the movement of drones will dynamically change the interference pattern that affects the SE of the system.
As a result,
optimizing the moving path of drones during the operation time is one of the challenges in designing the system.
Assuming a global knowledge of all drones and all users in the system,
a centralized controller can find the optimal locations for all drones that maximize the SE (see section \ref{sec:problem}).
However, 
finding the optimal location is of high complexity and cannot be achieved in real time.
Alternatively,
at each time slot,
each drone in a cell can dynamically choose the moving direction between $0$ and $2\pi$ to maximize the SE.
In order to reduce the complexity of the problem,
discrete angles are introduced to reduce the search space.
For instance,
suppose that the angle step is $\Delta g = \pi/M$,
then $2M$ candidate directions, i.e.,
$\{0,\frac{\pi}{M}, \frac{2\pi}{M} , \dots , 2\pi - \frac{\pi}{M}\}$,
will be examined by each drone.
Then,
each drone will update its movement angle and transmit data to users at each defined time slot $\Delta t$ while moving.

Given $N$ drones in the network area,
there are $(2M)^{N}$ possible cases at each time slot, 
making the complexity of centralized controller method in the order of $O({2M}^{N})$,
which is difficult to implement in a real-time manner.

Therefore,
we propose distributed algorithms that each drone can choose a moving direction at each time slot,
independent of the other drones.
As a result,
the complexity of the system reduces to $O(2NM)$ at each time slot.

To this end,
we define two criteria and let each drone calculate the metric based on either criterion for each possible direction.
Then the drone decides its moving direction based on maximizing the criterion.

\begin{figure}[!t]
\centering
  \includegraphics[scale=0.38]{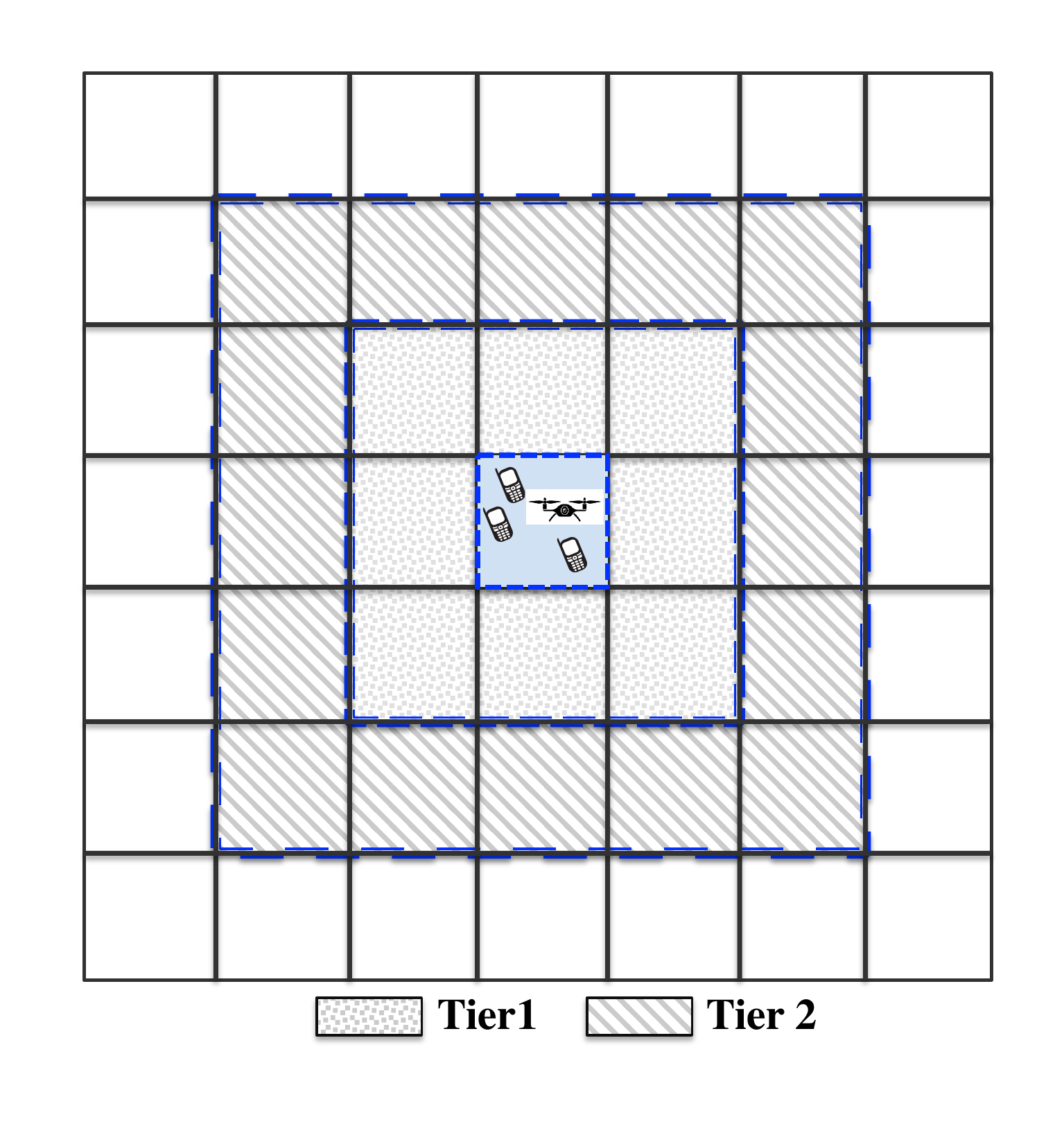}
\caption{Multi-drone multi-user system containing 49 square cells}
\label{fig:arch}
\vspace{-0.4cm}
\end{figure}

\subsection{Maximization of SNR} \label{sec:critSNR}

The first criterion is based on the maximization of \textit{SNR} (Signal to Noise Ratio) for active users in each cell.
In this strategy,
each drone focuses only on its active users during a time slot.
Each drone assumes that there are no other drones in the neighborhood,
and given the location of its own active users,
calculates the SNRs for every active user and every candidate angle.
Finding the best moving angle based on such SNR maximization criterion can be formally presented as
\begin{align}
 \begin{split}
 \omega^t_n =\  & \text{arg}\,\max\  {SNR}_n \quad  n \in \{1,\dots, N\}\\
s.t \ \ & \left|r_{u,n}(t+{\Delta t})-r_{u,n}(t)\right|\leq{v{\Delta t}}.
\end{split}
 \label{eq:snr_alg}
\end{align}

When there is no pending request, drones stay hovering at their current location.

\subsection{Maximization of SLR} \label{sec:critSLR}

The movement of each drone also changes its interference toward the active users in the neighbor cells.
As a result,
we propose another criterion for each drone that not only considers the received signal for its own active users,
but also attempts to reduce the interference to the other active users in the neighbor cells.
In this criterion,
we assume that each drone knows the location of the active users in the neighbor cells as well.
The amount of interference from drone $n$ to the other active users is referred to as \textit{Leakage}~\cite{6571248},
which is defined as follows,
\begin{align}
 \begin{split}
L^t_n=\sum_{j \in N, j\not= n, d_{jn} \leq \kappa}\big(\sum_{\forall u \in \mathpzc{Q}^t_{j}} {S^{path}}(h,r_{u,n})\big).
\end{split}
 \label{eq:leakage}
\end{align}

Based on (\ref{eq:leakage}),
each drone calculates the \textit{SLR} (Signal to Leakage Ratio) value for every candidate direction for the active users in the neighbor cells. 
Such SLR value for each user $u$ of drone $n$ can be written as
\begin{align}
 \begin{split}
SLR^t_u= \frac{S^{path}(h,r_{u,n})}{L^t_n}.
\end{split}
 \label{eq:slr}
\end{align}

Similar to SNR criteria, 
when there is no pending request, 
drones stay hovering at their current location. 
Finding the best moving angle based on such SLR maximization criterion can be formally presented as
\begin{align}
 \begin{split}
 \omega^t_n =& \text{arg}\,\max\  {SLR}_n  \quad  n \in \{1,\dots, N\}\\
s.t \ \ & \left|r_{u,n}(t+{\Delta t})-r_{u,n}(t)\right|\leq{v{\Delta t}}.
\end{split}
 \label{eq:slr_alg}
\end{align}

\section{Simulation Results} \label{sec:simulationresults}

\begin{figure}[!t]
\centering
  \includegraphics[scale=0.42]{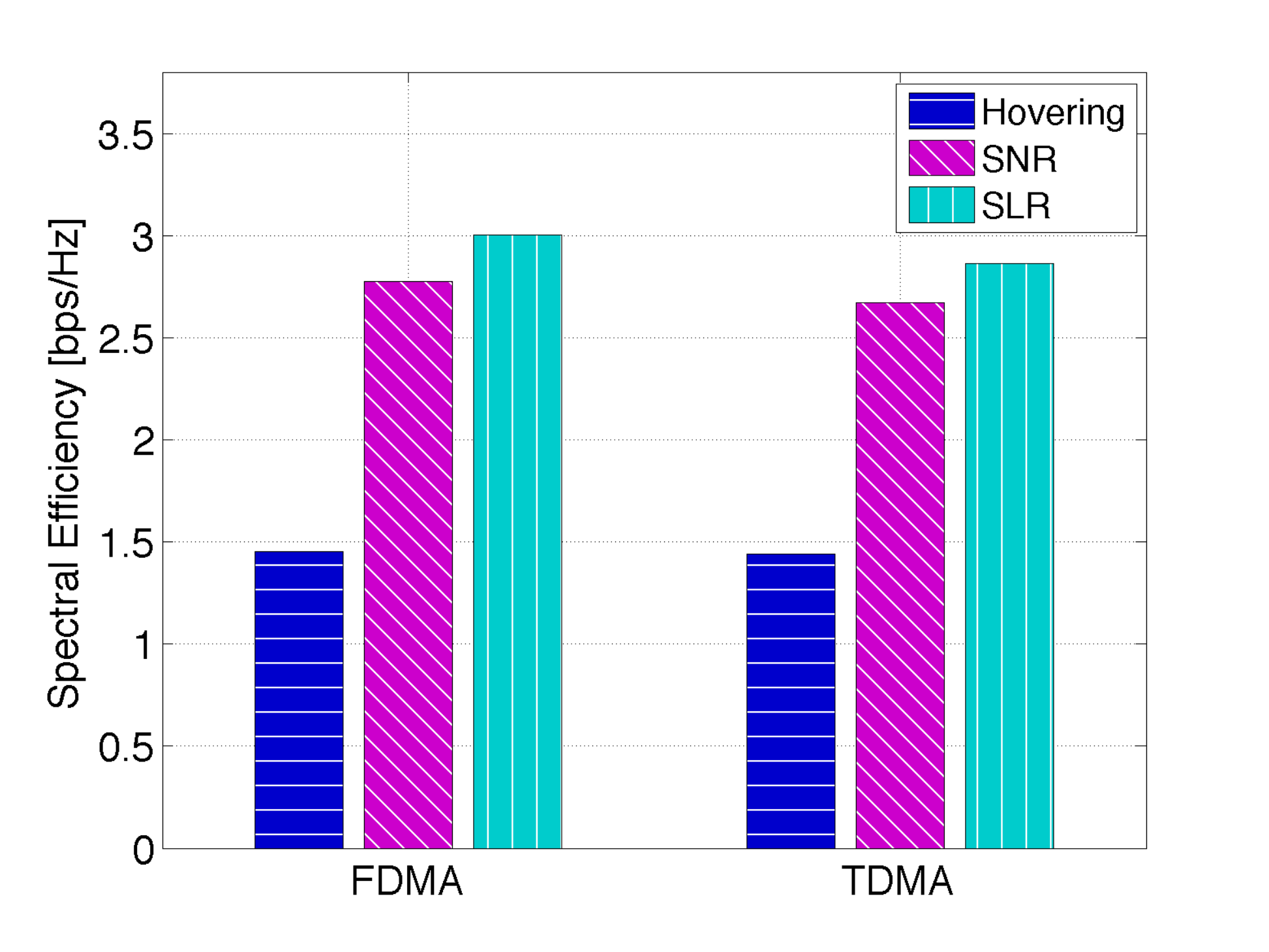}
\caption{Spectral efficiency for both hovering drone BSs and dynamic repositioning drone BSs at low altitude $h=$10m}
\label{fig:se}
\vspace{-0.3cm}
\end{figure}

In this section,
the performance of our proposed algorithms is evaluated in a multi-drone multi-mobile-user small cell network through simulation using MATLAB.

\begin{figure*}
    \centering
    \begin{subfigure}[b]{0.32\textwidth}
        \includegraphics[width=1.1\textwidth,height=1\textwidth]{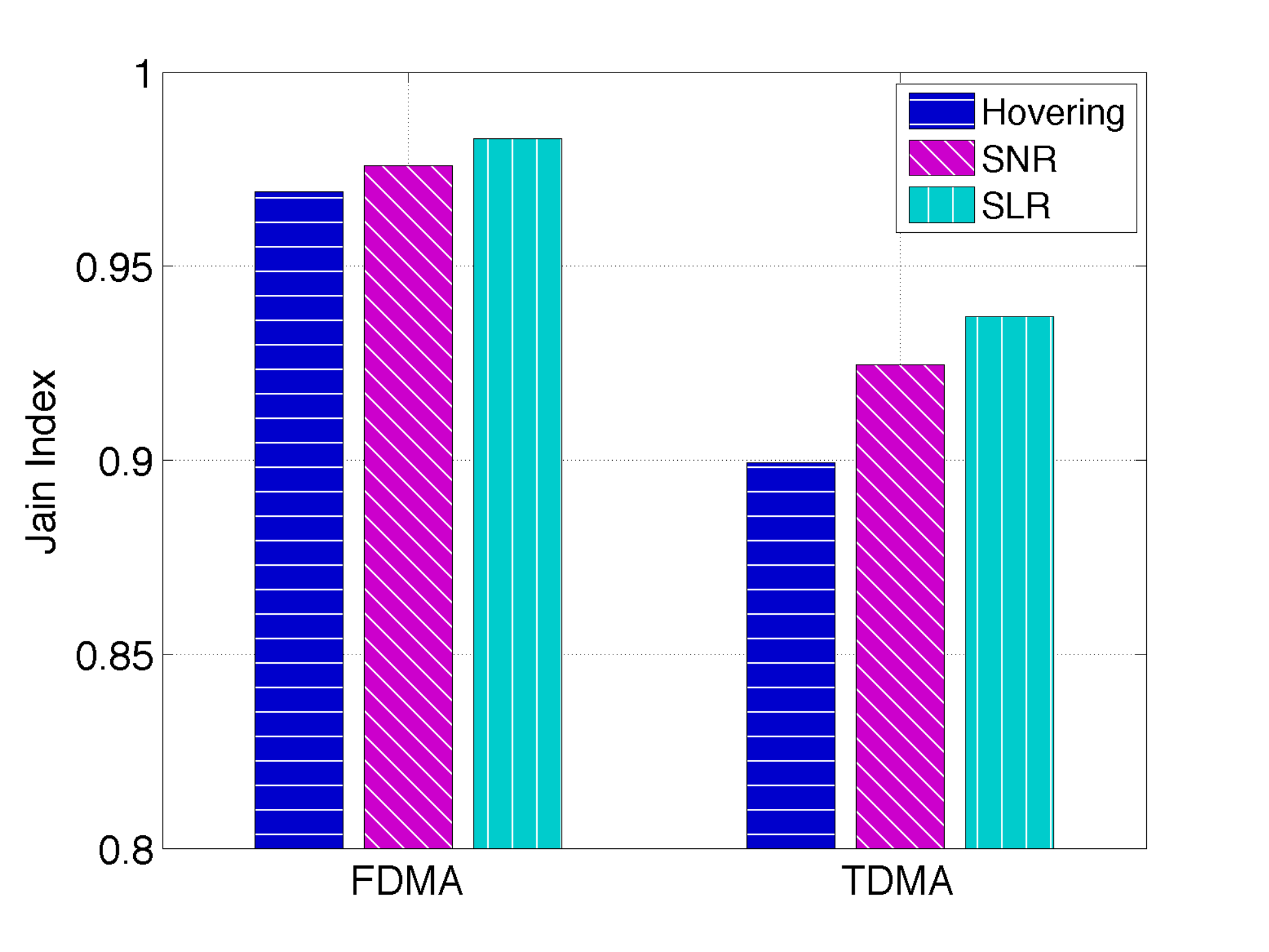}
        \caption{Jain Index for dynamic repositioning and hovering drone BSs}
        \label{fig:jain}
    \end{subfigure}
    ~ 
    \begin{subfigure}[b]{0.32\textwidth}
        \includegraphics[width=1.1\textwidth,height=1\textwidth]{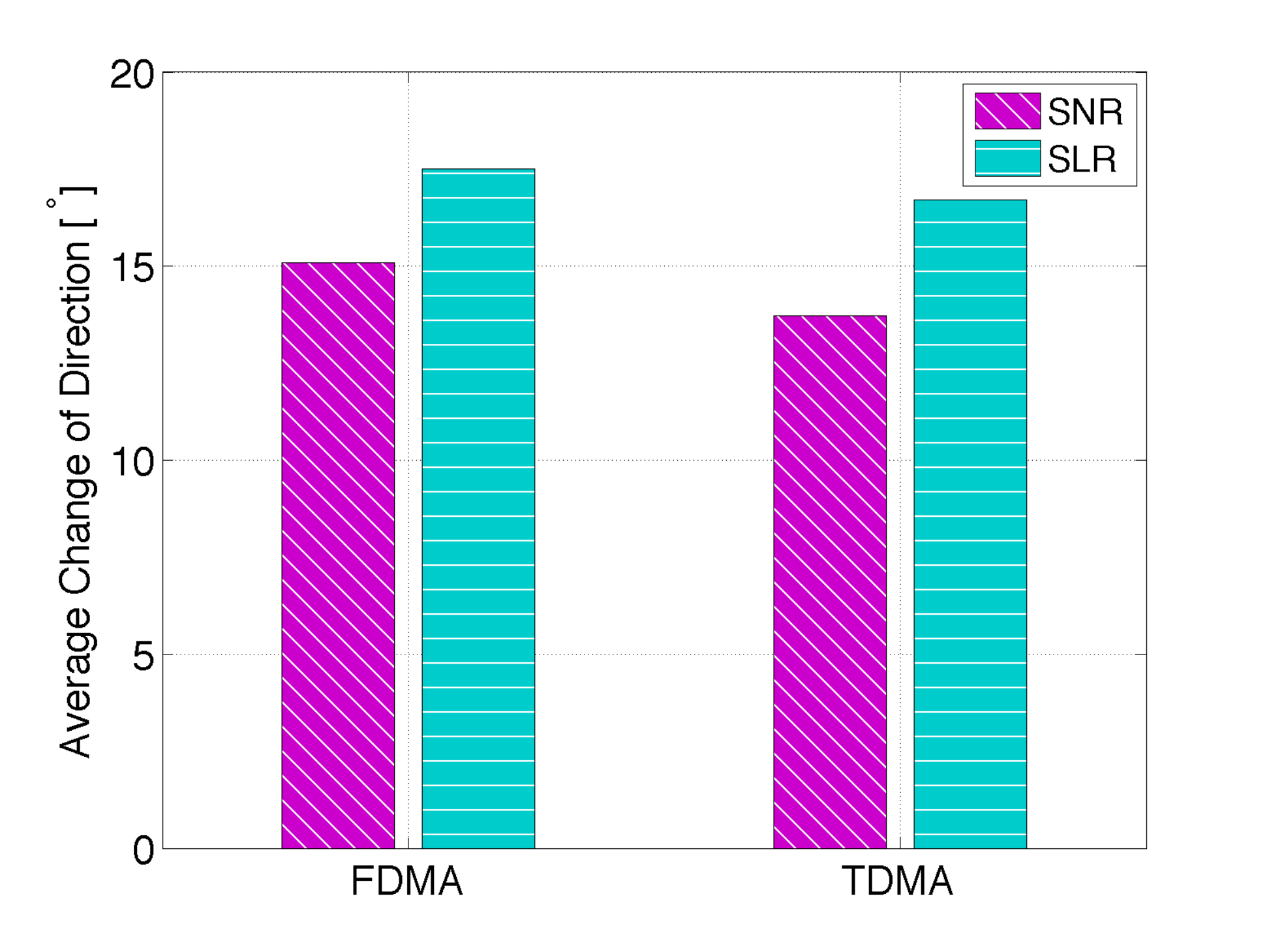}
        \caption{The average changing direction value for dynamic repositioning drone BSs}
        \label{fig:angle}
    \end{subfigure}
    ~ 
    \begin{subfigure}[b]{0.32\textwidth}
        \includegraphics[width=1.1\textwidth,height=1\textwidth]{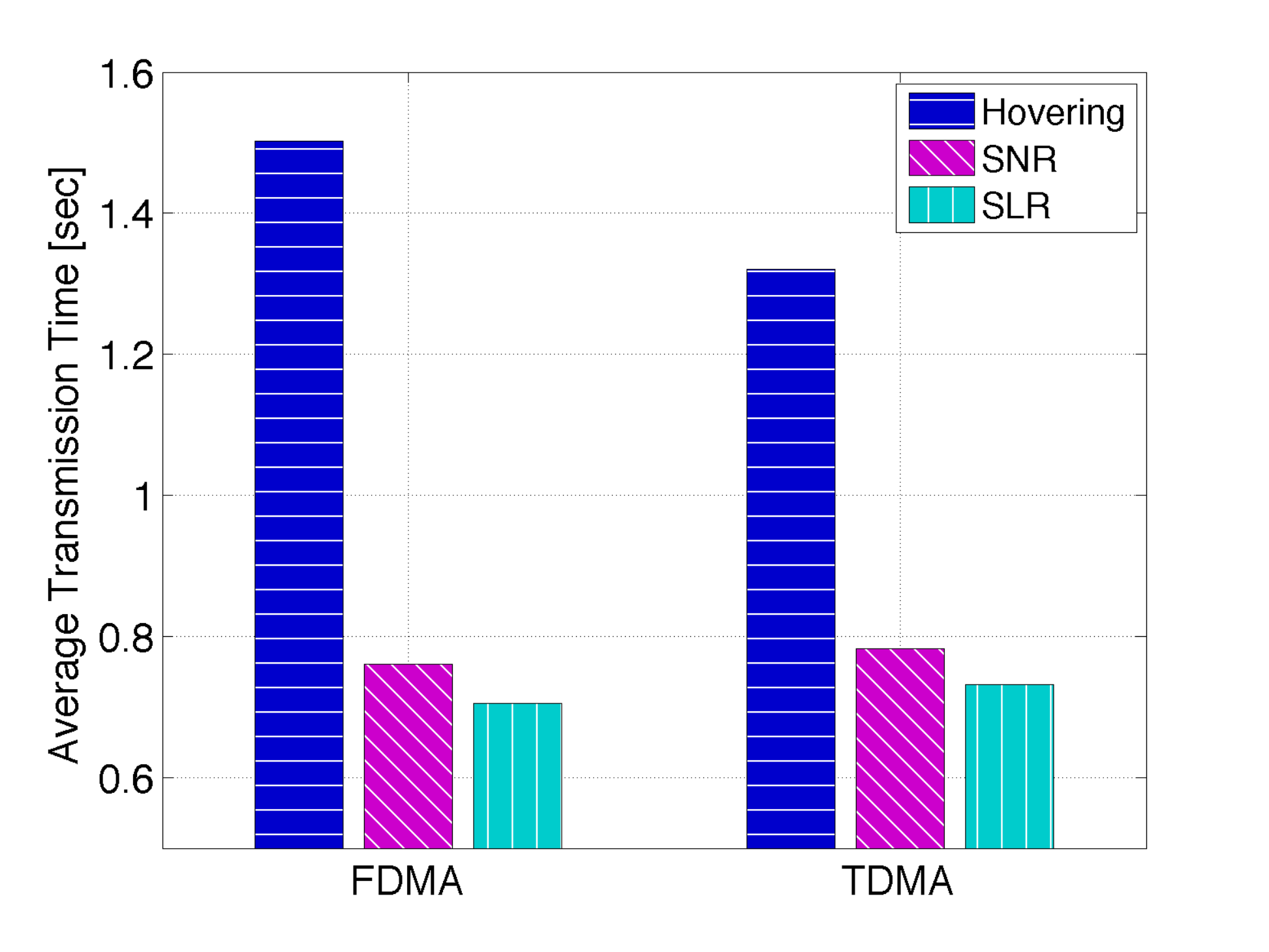}
        \caption{The average transmission time for each user request}
        \label{fig:txtime}
    \end{subfigure}
    \caption{Comparison of (a) Jain index, (b) average turning angles, (c) average transmission time for hovering and dynamic repositioning drones }\label{fig:simulmetrics}
\end{figure*}

\subsection{Simulation Scenarios and Parameters} \label{sec:sim_scen}
Here,
we consider 49 square cells,
each with an edge length of 80m,
which can be approximated by a disk with a radius of 40m.
In each cell,
there are 5 users moving according to a random way point (RWP) model. 
In the RWP model, 
each user selects a random destination independent of other users and moves there with a random speed. 
As a result, 
user would be uniformly distributed in the cell during the simulation time.
Following the recommended 3GPP traffic model,
the mean reading time for each user is set to 20sec,
and each request of user generates a 2MByte data packet\cite{3gpp36814}.
Unless otherwise stated,
parameter values used in simulations are set according to Table~\ref{tbl:values}.

We consider an average interference distance of 200m,
and hence,
each drone creates interference for up to two tiers of neighbor cells as illustrated in Figure \ref{fig:arch}.
From Figure~\ref{fig:arch},
we can see that outer cells will receive lower interference than inner cells.
To obtain unbiased results,
we collect the data just from the 9 inner cells as shown in Figure \ref{fig:arch}.
Moreover,
to mitigate the randomness of the results,
all results have been averaged over 10 independent runs of 400-second simulations.

\begin{table}[]
\centering
\caption{Parameter values in our simulation}
\label{tbl:values}
\begin{tabular}{lll}
\hline
{\bf Symbol}              & {\bf Definition}    	 & {\bf Value}  	 \\
\hline
\hline
$N$					& Number of Drones	& 49\\
$h$		& Drone Height 		                                 &10 m\\
$v$		& Drone Speed			                                   &10 m/s	\\
$B$					& Total Bandwidth						&10 MHz	\\
$f$					&Working Frequency					&2 GHz		\\
$P_{tx}$		& Drone Transmission Power 	        & 24 dBm	\cite{TR36.828}\\
$\delta_{ue}$               & UE Noise Figure                  & 9 dB      \\
$U$					&Number of Users in Each Cell	 &5		\\
$l$					& Edge Length of a Square Cell	&80 m	\\
$\lambda$		& Mean Reading Time	 & 20 sec\\
$A$		&Reference Distance Path Loss (LoS/NLoS)	&	41.1/33 \cite{TR36.828}	\\
$\gamma$		&  Path Loss Exponent (LoS/NLoS)	&	 2.09/3.75 \cite{TR36.828} 		\\
$\alpha, \beta$               & Environmental Parameter for Urban Area   & 9.61, 0.16 \cite{yaliniz2016efficient}\\
$\kappa$   &Interference Distance  & 200 m\\
$s$		&Data Size		& 2 MByte	\\
$\Delta t$   &Time Slot  & 100 msec\\
$T$		&Simulation Time	 & 400 sec	\\
$\Delta g$   &Angle Step  & $5^{\circ}$ \\
\hline
\end{tabular}
\end{table}


\subsection{The SE Performance} \label{sec:sim_SE}

In Figure \ref{fig:se},
we plot the SE performance for both the hovering drone BSs and the dynamic repositioning drone BSs at a low altitude of $h=10m$.
From this figure, we can draw the following observations:
\begin{itemize}
  \item The SE performance suffers from a huge loss in the face of inter-cell interference from neighbor cells compared to the single-cell system.
  \item Moreover,
it can be concluded that the introduction of the dynamic repositioning capability to drones can significantly improve the SE performance without any negative effect on the energy consumption.
More specifically,
as mentioned earlier,
the power consumption of a drone moving at 10m/s is almost equal to the power consumption of a hovering drone at the same height \cite{7101619}.
  \item The SLR criterion performs slightly better (about 7-8\%) than the SNR criterion,
  which can be attributed to the fact that the SLR criterion has more information than the SNR one,
  and can manage the drone movement while reducing interference to neighboring cell users.
  \item The performance gain is equally achievable by means of FDMA and TDMA,
  which implies a wide application for the proposed dynamic repositioning scheme.
\end{itemize}

\subsection{The fairness Performance} \label{sec:sim_fairness}

Furthermore,
we study the fairness performance for the proposed criteria and the proposed bandwidth allocation algorithms.
In Figure \ref{fig:jain},
we show the fairness in terms of the Jain's index for the interested schemes.
From this figure, we can draw the following observations:
\begin{itemize}
  \item The dynamic repositioning drone BSs can deliver better fairness compared to the hovering ones.
The reason is that the dynamic repositioning drones can change their location in order to improve the overall SE,
thus eliminating the notion of cell-edge users and resulting in a higher fairness metric.
  \item The fairness metric for the FDMA method is higher than that of the TDMA method.
The reason is that the TDMA method allocates the whole bandwidth to just one user at each time slot,
and hence other users may experience bandwidth starvation,
resulting in a lower fairness metric.
\end{itemize}

\subsection{Drone Turning Angles}\label{sec:sim_angle}

As dynamic repositioning requires the drone to change its direction every $\Delta t$ sec, 
we analyze the average degree of change of direction in Figure~\ref{fig:angle}. 
It is encouraging to see that the average turning angle is very small and confined to between 14 and 18 degree depending on the choice of our control algorithms. 

\subsection{Tranmission Time Performance} \label{sec:txttime}

Following the improvement of spectral efficiency of dynamic repositioning drones observed in the Figure \ref{fig:se},
the average transmission time for each user request is reduced by our proposed methods as well.
As illustrated in Figure \ref{fig:txtime},
dynamic repositioning drones can decrease the transmission time by around 50\%.

%

\section{Conclusion and Future Work}\label{sec:conclusion}

We have proposed dynamic repositioning of drone BSs as a novel method to increase the spectral efficiency of drone small cells at a low altitude, e.g., 10m. 
We proposed two distributed algorithms that can be used to autonomously reposition the drones in response to user activities and mobility. 
We have shown that dynamic repositioning of drones can nearly double the spectral efficiency in a multi-drone multi-mobile-user small cell network, 
while its energy consumption can be kept at the same level as the network where drones are hovering above pre-determined positions.

Our work opens up several directions of new research. 
With advancements in camera technology and video processing, 
drones in the future could identify obstacles in real time. 
Such knowledge could be used to design more advanced dynamic repositioning algorithms to largely increase the probability of LoS communications with active users. 
Another direction would be to actually implement the proposed algorithms in real drones and conduct real-life experiments. 
Such experiments will provide insights to more practical issues and help improve the dynamic repositioning algorithms.

\bibliographystyle{IEEEtran}
\bibliography{mainwowmom}

\begin{thebibliography}{10}
\providecommand{\url}[1]{#1}
\csname url@samestyle\endcsname
\providecommand{\newblock}{\relax}
\providecommand{\bibinfo}[2]{#2}
\providecommand{\BIBentrySTDinterwordspacing}{\spaceskip=0pt\relax}
\providecommand{\BIBentryALTinterwordstretchfactor}{4}
\providecommand{\BIBentryALTinterwordspacing}{\spaceskip=\fontdimen2\font plus
\BIBentryALTinterwordstretchfactor\fontdimen3\font minus
  \fontdimen4\font\relax}
\providecommand{\BIBforeignlanguage}[2]{{%
\expandafter\ifx\csname l@#1\endcsname\relax
\typeout{** WARNING: IEEEtran.bst: No hyphenation pattern has been}%
\typeout{** loaded for the language `#1'. Using the pattern for}%
\typeout{** the default language instead.}%
\else
\language=\csname l@#1\endcsname
\fi
#2}}
\providecommand{\BIBdecl}{\relax}
\BIBdecl

\bibitem{Namuduri}
K.~Namuduri, Y.~Wan, and M.~Gomathisankaran, ``Mobile ad hoc networks in the
  sky: State of the art, opportunities, and challenges,'' in \emph{Proceedings
  of the Second ACM MobiHoc Workshop on Airborne Networks and Communications},
  ser. ANC '13.\hskip 1em plus 0.5em minus 0.4em\relax New York, NY, USA: ACM,
  2013, pp. 25--28.

\bibitem{6399121}
S.~Rohde and C.~Wietfeld, ``Interference aware positioning of aerial relays for
  cell overload and outage compensation,'' in \emph{Vehicular Technology
  Conference (VTC Fall), 2012 IEEE}, Sept 2012, pp. 1--5.

\bibitem{7122576}
A.~Merwaday and I.~Guvenc, ``{UAV} assisted heterogeneous networks for public
  safety communications,'' in \emph{Wireless Communications and Networking
  Conference Workshops (WCNCW), 2015 IEEE}, March 2015, pp. 329--334.

\bibitem{7417609}
M.~Mozaffari, W.~Saad, M.~Bennis, and M.~Debbah, ``Drone small cells in the
  clouds: Design, deployment and performance analysis,'' in \emph{2015 IEEE
  Global Communications Conference (GLOBECOM)}, Dec 2015, pp. 1--6.

\bibitem{7451189}
V.~Sharma, M.~Bennis, and R.~Kumar, ``Uav-assisted heterogeneous networks for
  capacity enhancement,'' \emph{IEEE Communications Letters}, vol.~20, no.~6,
  pp. 1207--1210, June 2016.

\bibitem{7122575}
X.~Li, D.~Guo, H.~Yin, and G.~Wei, ``Drone-assisted public safety wireless
  broadband network,'' in \emph{Wireless Communications and Networking
  Conference Workshops (WCNCW), 2015 IEEE}, March 2015, pp. 323--328.

\bibitem{6848000}
A.~E. A.~A. Abdulla, Z.~M. Fadlullah, H.~Nishiyama, N.~Kato, F.~Ono, and
  R.~Miura, ``An optimal data collection technique for improved utility in
  uas-aided networks,'' in \emph{IEEE INFOCOM 2014 - IEEE Conference on
  Computer Communications}, April 2014, pp. 736--744.

\bibitem{7192644}
K.~Li, W.~Ni, X.~Wang, R.~P. Liu, S.~S. Kanhere, and S.~Jha, ``Energy-efficient
  cooperative relaying for unmanned aerial vehicles,'' \emph{IEEE Transactions
  on Mobile Computing}, vol.~15, no.~6, pp. 1377--1386, June 2016.

\bibitem{7090210}
G.~Stamatescu, D.~Popescu, and R.~Dobrescu, ``Cognitive radio as solution for
  ground-aerial surveillance through wsn and uav infrastructure,'' in
  \emph{Electronics, Computers and Artificial Intelligence (ECAI), 2014 6th
  International Conference on}, Oct 2014, pp. 51--56.

\bibitem{al2014optimal}
A.~Al-Hourani, S.~Kandeepan, and S.~Lardner, ``Optimal {LAP} altitude for
  maximum coverage,'' \emph{Wireless Communications Letters, IEEE}, vol.~3,
  no.~6, pp. 569--572, Dec 2014.

\bibitem{7510870}
M.~Mozaffari, W.~Saad, M.~Bennis, and M.~Debbah, ``Optimal transport theory for
  power-efficient deployment of unmanned aerial vehicles,'' in \emph{2016 IEEE
  International Conference on Communications (ICC)}, May 2016, pp. 1--6.

\bibitem{yaliniz2016}
R.~I. Bor-Yaliniz, A.~El-Keyi, and H.~Yanikomeroglu, ``Efficient 3-d placement
  of an aerial base station in next generation cellular networks,'' in
  \emph{2016 IEEE International Conference on Communications (ICC)}, May 2016,
  pp. 1--5.

\bibitem{7037248}
A.~Al-Hourani, S.~Kandeepan, and A.~Jamalipour, ``Modeling air-to-ground path
  loss for low altitude platforms in urban environments,'' in \emph{2014 IEEE
  Global Communications Conference}, Dec 2014, pp. 2898--2904.

\bibitem{7848883}
A.~Fotouhi, M.~Ding, and M.~Hassan, ``Dynamic base station repositioning to
  improve performance of drone small cells,'' in \emph{2016 IEEE Globecom
  Workshops (GC Wkshps)}, Dec 2016, pp. 1--6.

\bibitem{Ding2016GC_ASECrash}
M.~{Ding} and D.~{Lopez-Perez}, ``Please lower small cell antenna heights in
  {5G},'' \emph{IEEE Globecom 2016, [Online] {arXiv:1611.01869 [cs.IT]}}, pp.
  1--6, Dec. 2016.

\bibitem{ZORBAS201380}
E.~M. Shakshuki, D.~Zorbas, T.~Razafindralambo, D.~P.~P. Luigi, and
  F.~Guerriero, ``Energy efficient mobile target tracking using flying
  drones,'' \emph{Procedia Computer Science}, vol.~19, pp. 80 -- 87, 2013.

\bibitem{Rubio2016}
\BIBentryALTinterwordspacing
A.~A. Rubio, A.~Seuret, Y.~Ariba, and A.~Mannisi, \emph{Optimal Control
  Strategies for Load Carrying Drones}.\hskip 1em plus 0.5em minus 0.4em\relax
  Cham: Springer International Publishing, 2016, pp. 183--197. [Online].
  Available: \url{http://dx.doi.org/10.1007/978-3-319-32372-5_11}
\BIBentrySTDinterwordspacing

\bibitem{7101619}
C.~D. Franco and G.~Buttazzo, ``Energy-aware coverage path planning of uavs,''
  in \emph{Autonomous Robot Systems and Competitions (ICARSC), 2015 IEEE
  International Conference on}, April 2015, pp. 111--117.

\bibitem{TR36.828}
3gpp, ``3gpp tr 36.828,further enhancements to lte time division duplex for
  downlink-uplink interference management and traffic adaptation,'' 3GPP
  Technical Report, Tech. Rep., 2012.

\bibitem{3gpp36814}
3GPP, ``3gpp tr 36.814 version 9.0.0, release 9: 3rd generation partnership
  project; further advancements for e-utra physical layer aspects,'' 3GPP
  Technical Report, Tech. Rep., 2010.

\bibitem{thermalnoise}
C.~S. Turner, ``Johnson-nyquist noise,'' \emph{url: http://www. claysturner.
  com/dsp/Johnson-NyquistNoise}, 2012.

\bibitem{TR36.872}
3GPP, ``{TR 36.872: Small cell enhancements for E-UTRA and E-UTRAN - Physical
  layer aspects},'' Dec. 2013.

\bibitem{Tutor_smallcell}
D.~Lopez-Perez, M.~Ding, H.~Claussen, and A.~Jafari, ``{Towards 1 Gbps/UE in
  cellular systems: Understanding ultra-dense small cell deployments},''
  \emph{IEEE Communications Surveys Tutorials}, vol.~17, no.~4, pp. 2078--2101,
  Jun. 2015.

\bibitem{Book_Proakis}
J.~G. Proakis, \emph{{Digital Communications (4th Ed.)}}.\hskip 1em plus 0.5em
  minus 0.4em\relax {New York: McGraw-Hill}, 2000.

\bibitem{jain1984quantitative}
R.~Jain, D.-M. Chiu, and W.~R. Hawe, \emph{A quantitative measure of fairness
  and discrimination for resource allocation in shared computer system}.\hskip
  1em plus 0.5em minus 0.4em\relax Eastern Research Laboratory, Digital
  Equipment Corporation Hudson, MA, 1984, vol.~38.

\bibitem{6571248}
M.~Ding, M.~Zhang, H.~Luo, and W.~Chen, ``Leakage-based robust beamforming for
  multi-antenna broadcast system with per-antenna power constraints and
  quantized cdi,'' \emph{IEEE Transactions on Signal Processing}, vol.~61,
  no.~21, pp. 5181--5192, Nov 2013.

\bibitem{yaliniz2016efficient}
R.~I. Bor-Yaliniz, A.~El-Keyi, and H.~Yanikomeroglu, ``Efficient 3-d placement
  of an aerial base station in next generation cellular networks,'' in
  \emph{2016 IEEE International Conference on Communications (ICC)}, May 2016,
  pp. 1--5.

\end{thebibliography}

\end{document}